\newcommand{\cmark}{\ding{51}}%
\newcommand{\xmark}{\ding{55}}%
\newtheorem{lemma}{Lemma}
\newtheorem{theorem}{Theorem}
\begin{document}

\title{Absence of censoring inequalities in random quantum circuits}

\def\urbana{
Institute for Condensed Matter Theory and IQUIST and NCSA Center for Artificial Intelligence Innovation and Department of Physics, University of Illinois at Urbana-Champaign, IL 61801, USA
}

\author{Daniel Belkin}
\thanks{Co-first authors.}
\affiliation{\urbana}
\author{James Allen}
\thanks{Co-first authors.}
\affiliation{\urbana}
\author{Bryan K. Clark}
\email{bkclark@illinois.edu}
\affiliation{\urbana}
 
\begin{abstract}
    Ref.~\onlinecite{Harrow2023} asked whether deleting gates from a random quantum circuit architecture can ever make the architecture a better approximate $t$-design. We show that it can. In particular, we construct a family of architectures such that the approximate $2$-design depth decreases when certain gates are deleted. We also give some intuition for this construction and discuss the relevance of this result to the approximate $t$-design depth of the 1D brickwork.  Deleting gates always decreases scrambledness in the short run, but can sometimes cause it to increase in the long run. Finally, we give analogous results for spectral gaps and when deleting edges of interaction graphs. 
\end{abstract}

\maketitle

\section{Introduction}
Imagine you wish to mix two colors of paint together in a bucket. If you stir only one side of the bucket, the colors won't mix perfectly, but one expects that the end state will at least be closer to mixed than where you started. In general, one expects a local scrambling operation to increase the overall scrambledness of a physical system. Formalizations of this intuition are sometimes known as ``censoring inequalities". One needs to choose a set of local operations (e.g. Glauber dynamics) and find a way to quantify scrambledness (e.g. total variation distance from the equilibrium distribution). Such inequalities have been established for many classical Markov chains of interest, usually via monotonicity arguments\cite{Peres2011, Lacoin2016, Gantert2022, Caputo2019, Fill2013, Schmid2019, Nestoridi2023, Blanca2018}, but are known not to hold in all cases \cite{Holroyd2011}. 

Ref.~\onlinecite{Harrow2023} asked whether deleting gates from a random quantum circuit architecture can ever make the corresponding ensemble a better approximate $t$-design. This question is equivalent to the existence of a censoring inequality for random quantum circuits, with the local operations being Haar-random local unitary gates and ``scrambledness'' quantified by distance from the global Haar measure. It seems intuitively plausible that adding more local scrambling must increase global scrambledness in this setting, but a proof has been elusive. 

One particular censored circuit of interest is the architecture introduced in Ref.~\cite{Schuster2024}. This circuit can be obtained from the 1D brickwork by deleting $O\left(\frac{N}{\log N}\right)$ gates, and it is known to form an approximate $t$-design in depth at most $O(\log N)$. For the brickwork, meanwhile, the best known upper bound is $O(N)$. A censoring inequality for multiplicative error would thus imply a large improvement to the approximate $t$-design depth of the 1D brickwork. Another case of interest is Conjecture 1 of Ref.~\onlinecite{Belkin2023}, which proposes that the singular value spectral gaps of arbitrary architectures may be bounded by that of the brickwork. A censoring inequality for singular value gaps would imply that spectral gaps of arbitrary architectures are at least bounded by those of treelike architectures. 
 
In this work we give a counterexample to show that no such censoring inequality can exist. We construct a circuit ensemble which is not an $\epsilon$-approximate $t$-design, but becomes one when certain gates are deleted. Our proof computes the spectral gaps of the $t=2$ moment operator and relates these spectral gaps to approximate $t$-design conditions. We also give some intuition for our construction via a classical analogy (Section \ref{section:intuition}) and present numerical results (Figure \ref{fig:architecture} and Appendix \ref{app:numerics}).

\section{Main results}
Consider a circuit $C$ on $N$ qubits which consists of five Haar-random unitary gates. The first gate $U_1$ acts on every qubit except the first. The second gate $U_2$ acts on the first and second qubits. The third gate $U_3$ acts on every qubit except the second. $U_4$ and $U_5$ act on the same sites as $U_2$ and $U_1$, respectively, so that the circuit ensemble is symmetric under time reversal. Let $C^{d}$ refer to the circuit architecture obtained by repeating this architecture $d$ times. We are interested in comparing $C$ against an analogous architecture $C'$ which omits gate $U_2$ and $U_4$. These two circuit architectures are illustrated in Figure \ref{fig:architecture}a.

\begin{theorem}
    \label{thm:depth}
    For sufficiently large $N$, there exist values of $\epsilon$ and $d$ such that $C'^d$ is an $\epsilon$-approximate $2$-design and ${C}^{d}$ is not. 
\end{theorem}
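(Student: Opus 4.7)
My plan is to reduce the 2-design question to a comparison of spectral gaps of the second-moment operators $M_C = P_5 P_4 P_3 P_2 P_1$ and $M_{C'} = P_5 P_3 P_1$, where $P_i$ is the $t=2$ moment operator of a Haar-random unitary on the support $S_i$ of $U_i$. By the standard correspondence (e.g., via $\|M^d - M_H\|$, with $M_H$ the Haar second moment), $\epsilon$-approximate 2-design behavior at depth $d$ is governed, up to polynomial factors in $N$, by $\lambda_2(M)^d$, where $\lambda_2(M)$ is the largest eigenvalue of $M$ strictly below $1$. It therefore suffices to show $\lambda_2(M_C) - \lambda_2(M_{C'}) = \Omega(1)$ uniformly in $N$; choosing $d$ so that $\lambda_2(M_{C'})^d < \epsilon < \lambda_2(M_C)^d$ then yields the theorem.

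Next, I would carry out the spectral analysis in the permutation subspace $V \cong (\mathbb{C}^2)^{\otimes N}$, identifying each site with the choice of identity versus SWAP in $S_2$. In $V$, each $P_i$ fixes qubits outside $S_i$ and projects the configuration on $S_i$ onto the 2-dimensional ``uniform'' subspace spanned by $|\mathrm{Id}^{S_i}\rangle$ and $|\mathrm{SWAP}^{S_i}\rangle$, with coefficients read off from the Weingarten (Gram-matrix) inverse. Because $S_1 = S_5 = \{2,\ldots,N\}$, $S_2 = S_4 = \{1,2\}$, and $S_3 = \{1,3,\ldots,N\}$, the image of $P_1$ forces qubits $2,\ldots,N$ into one of the two uniform configurations while qubit $1$ is free, and $P_3$ similarly frees qubit $2$. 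I would restrict $M_C$ and $M_{C'}$ to the effective (approximately invariant) subspace of states that are uniform on qubits $3,\ldots,N$ with arbitrary correlations between qubits $1,2$ and that uniform block, obtaining explicit $O(1)$-dimensional matrices whose entries are rational functions of $2^k$ for the relevant gate sizes $k$.

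On this effective subspace I would diagonalize both $M_C$ and $M_{C'}$ in the large-$N$ limit. The expected outcome is that $\lambda_2(M_{C'})$ is strictly smaller than $\lambda_2(M_C)$, reflecting the counterintuitive mechanism that inserting $P_2, P_4$ re-activates a slow mode living on qubits $1,2$ that $P_3, P_5$ alone would damp. A uniform-in-$N$ gap between the two effective subleading eigenvalues, together with a verification that the effective subspace actually contains the top non-Haar eigenvector of both operators, implies the theorem via the depth choice above.

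The main obstacle I anticipate is twofold: confirming the sign of the spectral-gap difference, and ruling out other slow-decaying eigenvectors outside the chosen effective subspace. Pinning down the sign is the real content of the theorem and reduces to a concrete small-matrix eigenvalue comparison, which one can carry out symbolically in the large-$N$ limit. Ruling out stray slow eigenvectors outside the effective subspace may require an auxiliary upper bound on the rest of the spectrum, for example by bounding $P_5 P_3 P_1$ away from eigenvalue $1$ on the orthogonal complement using standard detectability-lemma or brickwork-style arguments and then propagating the bound through the extra projectors $P_2, P_4$ in $M_C$ via $\|P_2\|=\|P_4\|=1$.
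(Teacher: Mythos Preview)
Your high-level strategy coincides with the paper's: reduce the $2$-design question to a comparison of subleading eigenvalues $\lambda_2(M_C)$ versus $\lambda_2(M_{C'})$, then convert a strict inequality $\lambda_2(M_C)>\lambda_2(M_{C'})$ into the existence of suitable $d,\epsilon$ via the standard two-sided bounds $\lambda \le \epsilon_A \le 2\epsilon_M \le 2q^{2Nt}\lambda$. Where you and the paper diverge is in the spectral computation. You propose to work directly in the $S_2$ permutation basis, isolate an effective low-dimensional block, and diagonalize it in the large-$N$ limit. The paper instead exploits a swap identity: since $S_{12}G_2 = G_2 S_{12} G_2$ and $G_3 = S_{12}G_1 S_{12}$, one has $G_1G_2G_3G_2G_1 = (G_1G_2G_1)^2$, which reduces both $M_C$ and $M_{C'}$ to three-site brickworks with local dimensions $(q,q,q^{N-2})$ and $(q,q^{N-2},q)$ respectively; a cited closed-form then gives exact eigenvalues for all $N$, namely $\lambda \to 1/25$ and $\lambda' \to 0$. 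This buys an exact, finite-$N$ result with essentially no computation, whereas your route requires carrying Weingarten coefficients through several projectors and taking a limit.

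Your anticipated second obstacle is not actually an obstacle: because both $M_C$ and $M_{C'}$ begin and end with $P_1$, whose image is exactly the four-dimensional space (qubit $1$ free, qubits $2,\ldots,N$ in the span of the two uniform permutation states), every nonzero eigenvalue already lives in that block. There are no stray slow eigenvectors to rule out, and no detectability-lemma argument is needed. The genuine remaining work in your plan is the first obstacle, the sign of the gap difference; the paper's swap reduction is precisely the clean way to resolve it.
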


In other words, deleting gates can cause a circuit to form an approximate $t$-design faster. This result is true for both multiplicative-error and additive-error $t$-designs. An analogous result holds when scrambledness is measured by the eigenvalue spectral gap of the moment operator, which controls the asymptotic rate of scrambling in a periodic architecture \cite{Allen2024}, or by the singular value spectral gap, which is related to the scrambling rate in arbitrary aperiodic architectures \cite{Belkin2023}.  

\begin{figure}[h]
    \centering
    \begin{tikzpicture}
        \begin{scope}
            \node[anchor=north west,inner sep=0] (image_a) at (0,0.45)
            {\includegraphics[width=0.34\columnwidth]{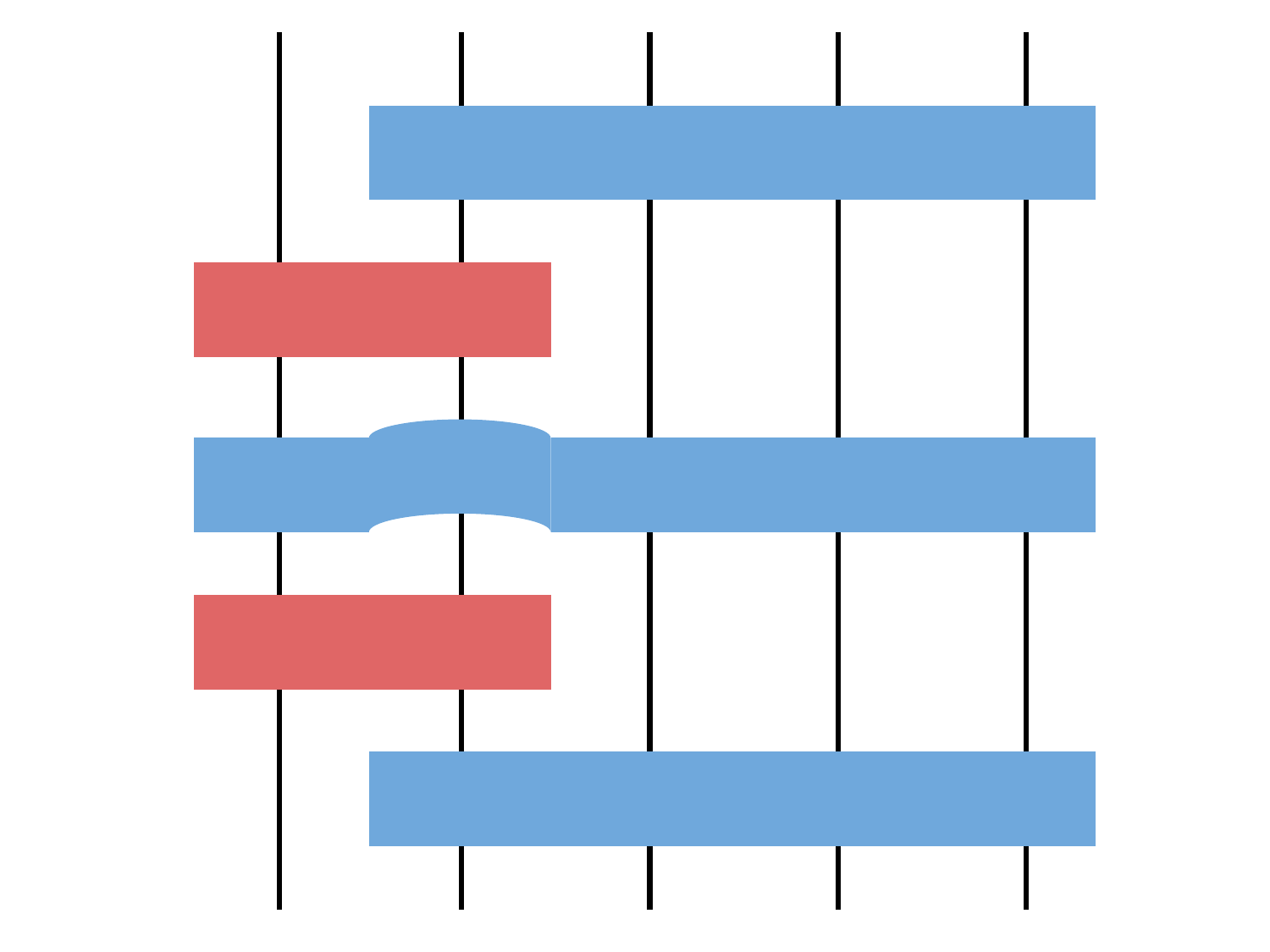}};
            \node [anchor=north west] (note) at (0.4,0) {\small{\textbf{a)}}};
        \end{scope}
        \begin{scope}[xshift=0.33\columnwidth]
            \node[anchor=north west,inner sep=0] (image_b) at (0,0)
            {\includegraphics[width=0.64\columnwidth]{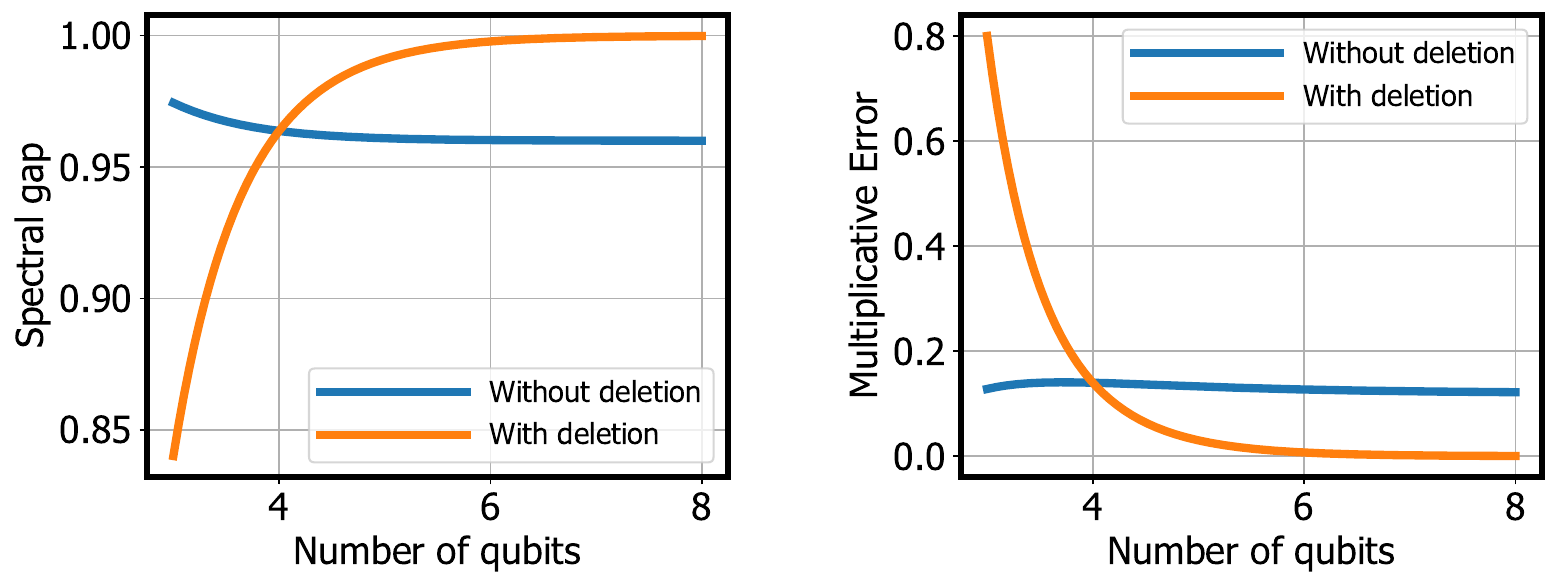}};
            \node [anchor=north west] (note) at (-0.2,0) {\small{\textbf{b)}}};
            \node [anchor=north west] (note) at (5.8,0) {\small{\textbf{c)}}};
        \end{scope}
    \end{tikzpicture}
    \vspace*{-0.4cm}
    \caption{(a) A single period of the architecture on $N = 5$ sites. When the red gates are deleted, the ensemble becomes more scrambled. (b) Spectral gap with and without red gates. (c) Multiplicative error for the second moment operator of a single period of the architecture, with and without red gates.}
    \label{fig:architecture}
\end{figure}

These circuits involve many-site gates and multiple deletions. However, there also exist circuits consisting only of two-site gates with the same properties under deletion of a single gate. In addition, we show numerically that censoring inequalities do not hold for non-deterministic architectures with gate locations sampled independently from the edges of some fixed graph. Our full results on the existence of censoring inequalites are summarized in Table \ref{table:status}.

We also note that censoring inequalities do hold for deletion of gates from the first or last layer of an architecture. In other words, deleting gates always moves the circuit further from equilibrium in the short run, even though it sometimes causes the circuit to equilibrate faster in the long run. This creates a crossover reminiscent of the Mpemba effect (see Fig.~\ref{fig:mpemba} in Appendix~\ref{app:numerics} for an example).

\begin{table}[h]
\label{table:status}
\begin{tabular}{|l|c|c|c|}
\hline
                            & Arbitrary Gates & Boundary Gates & Graph edges \\ \hline
Eigenvalue Spectral Gap     & \xmark             & \xmark             & \xmark          \\ \hline
Singular Value Spectral Gap & \xmark            & \cmark             & \xmark          \\ \hline
Additive Error              & \xmark            & \cmark             & \xmark          \\ \hline
Multiplicative Error        & \xmark            & \cmark             & \xmark          \\ \hline
\end{tabular}
\caption{The existence of a censoring inequality for each of four scrambling rate measures and three types of censorship. ``Boundary gates'' are gates from the first or last layer of an architecture, while ``graph edges'' refers to non-deterministic architectures in which gate positions are sampled uniformly from a graph. \cmark means we show a censoring inequality holds, while \xmark\, indicates that we construct a counterexample. Proofs are found in Appendix \ref{app:table_proofs}.}
\end{table}

\section{Intuition}
\label{section:intuition}
We begin with intuition for Theorem \ref{thm:depth}. Suppose site 1 is initially entangled with an auxiliary system $A$. Then $U_{2}$ will cause site $2$ to be entangled with $A$. Since $U_3$ does not act on site $2$, some entanglement with $A$ will remain localized on site $2$. If we remove $U_{2}$, on the other hand, the entanglement with $A$ is instead spread over every site except $2$ by gate $U_3$. Similarly, $U_4$ moves the entanglement back to site $1$ so that it is protected from $U_5$. 

In addition, a classical analogy can be made in terms of mixing colors. Suppose there is initially some red pigment localized on site 1 (Figure \ref{fig:color-mixing}) and each gate uniformly mixes the sites to which it is applied. Gates $U_2$ and $U_4$ allow a portion of the red color to move from site $1$ to site $2$ and back again, respectively, so that after the circuit $\sim 25\%$ of the original red remains localized on site 1. When we omit the even-numbered gates, on the other hand, $U_3$ dilutes the original red over every site but one. This corresponds to quantum information in the initial state being scrambled over many sites. We see that $U_2$ and $U_4$ serve to hide some of the initial information from the scrambling effect of $U_3$.

\begin{figure}[h]
    \centering
    
    \begin{tikzpicture}
        \begin{scope}
            \node[anchor=north west,inner sep=0] (image_a) at (0,0)
            {\includegraphics[width=0.6\columnwidth]{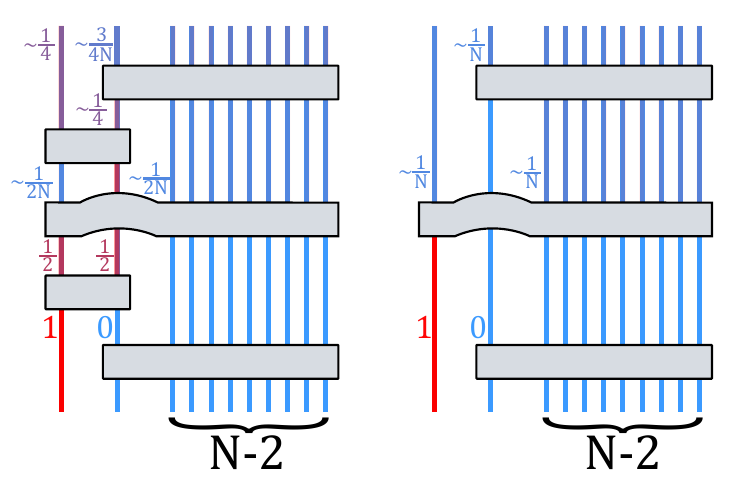}};
        \end{scope}
    \end{tikzpicture}
    \vspace*{-0.4cm}
    \caption{Classical analogy for the preservation of quantum information in the circuits $C$ \textit{(left)} and $C'$ \textit{(right)}. The scrambling of quantum information through gates is analogous to the mixing of a pigment over the whole system. We start with all the red pigment on the 1st site of the system, and track how much of that pigment gets spread to the other sites by the action of each gate. Each gate is a perfect mixer on its support, i.e. it replaces the amount of pigment on each site in its support with the average across the support. We assume $N \gg 1$ and give all pigment levels to leading order in $N$. Despite the fact that $C$ has more gates, those extra gates allow it to temporarily store the pigment on the 2nd site. As a result, the level of mixing is not as strong as it is in $C'$, where the mixing is so strong that the first site only differs from the rest of the system by an $O(\frac{1}{N^2})$ amount of pigment.}\label{fig:color-mixing}
\end{figure}

\section{Proof of Theorem \ref{thm:depth}}
Let $G_i$ be the vectorization of the $2$\textsuperscript{nd} moment operator for $U_i$. The eigenvalue spectral gap of $C$ is then defined by $\Delta = 1 - \lambda$, where $\lambda$ is the largest non-unit eigenvalue of the product $G_1 G_2 G_3 G_2 G_1$. Define $\Delta'$ and $\lambda'$ analogously.
We first show the following.
\begin{lemma}
\label{lemma:exact_gaps}
In the limit of large $N$, we have $\lambda = \frac{1}{25}$ and $\lambda' = 0$. More generally, for finite $N$ and local dimension $q$,
\begin{align}
    \label{eq:exact_gaps}
    \lambda &= \left(\frac{q^{2N} - q^4}{q^{2N} - q^2}\frac{1}{q^{2} + 1}\right)^2 \\
    \lambda' &= \left(\frac{q^2 - 1}{q^{N} - q^{2-N}}\right)^{2}
\end{align}
\end{lemma}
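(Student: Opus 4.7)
The plan is to reduce the spectral computation to a $4 \times 4$ matrix problem by identifying a small invariant subspace that captures all non-zero eigenvalues, and then applying the Weingarten formula.

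First, I would observe that each $G_i$ is the Hilbert--Schmidt orthogonal projector onto operators of the form (element of $\mathrm{span}\{I_{R_i}, S_{R_i}\}$) tensored with arbitrary operators on $\bar R_i$. A key algebraic fact is that $G_i$ annihilates any tensor product containing, on some site $j \in R_i$, a factor $B_j$ that is HS-orthogonal to both $I_j$ and $S_j$, because the corresponding Weingarten overlaps vanish. Letting $V = \bigotimes_j \mathrm{span}\{I_j, S_j\}$, the operator space decomposes as $V$ plus components labelled by non-empty subsets $S \subseteq \{1,\ldots,N\}$ recording the sites with HS-orthogonal-to-$\{I,S\}$ factors; $G_i$ annihilates any such component with $S \cap R_i \neq \emptyset$. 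Since $R_1 \cup R_2 = R_1 \cup R_3 = \{1,\ldots,N\}$, the compositions $G_2 G_1$ and $G_3 G_1$ kill every component outside $V$, so $\mathrm{Im}(M), \mathrm{Im}(M') \subseteq V$ and it suffices to diagonalize within $V$.

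Within $V$, the image of $G_1$ is the 4-dimensional subspace $W = \mathrm{span}\{|A\rangle, |B\rangle, |\epsilon\rangle, |\zeta\rangle\}$ with $|A\rangle = I^{\otimes N}$, $|B\rangle = S^{\otimes N}$, $|\epsilon\rangle = I_1 \otimes S^{\otimes(N-1)}$, $|\zeta\rangle = S_1 \otimes I^{\otimes(N-1)}$. Because both $M$ and $M'$ end with $G_1$, their images lie in $W$, reducing the problem to $4 \times 4$ matrices. Using Weingarten I would evaluate $G_R |\sigma\rangle$ for any $\{I,S\}$-configuration $\sigma$ on $R$ (giving a combination of $|I_R\rangle$ and $|S_R\rangle$ whose coefficients depend only on the counts of $I$'s and $S$'s in $\sigma$), and trace the action of $G_3, G_2, G_1$ on $W$ step by step. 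For $M'$ the only auxiliary states visited outside $W$ are $|I, S, I^{\otimes(N-2)}\rangle$ and its $I \leftrightarrow S$ partner; for $M$ the intermediate $G_2$'s additionally introduce $|I, I, S^{\otimes(N-2)}\rangle$ and its partner, so one works in an 8-dimensional extended invariant subspace before collapsing back into $W$. The resulting matrices in the basis $\{A, B, \epsilon, \zeta\}$ are block upper-triangular with diagonal blocks $I_2$ (on $\{A,B\}$, reflecting the global Haar fixed subspace) and $\mu I_2$; the leakage from $\{\epsilon,\zeta\}$ into $\{A,B\}$ lives in the off-diagonal block and does not affect the spectrum. The eigenvalue $\mu$ is then $\lambda$ (respectively $\lambda'$).

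The main technical effort is the bookkeeping of Weingarten coefficients on regions of size $N-1$ and $2$. To streamline I would exploit (i) the $I \leftrightarrow S$ symmetry, which forces the two non-trivial eigenvalues to coincide and so justifies the diagonal $\mu I_2$ block a priori; and (ii) the palindromic factorization $M = X X^\dagger$ with $X = G_1 G_2 G_3$, which lets me read off the non-zero eigenvalues of $M$ as squared singular values of the reduced $4 \times 4$ map $X : \mathrm{Im}(G_3)|_V \to \mathrm{Im}(G_1)|_V$. Algebraic identities such as $q^{2N} - q^4 = q^4 (q^{2N-4} - 1)$ then collapse the Weingarten expressions into the closed forms in \eqref{eq:exact_gaps}, and taking $N \to \infty$ with $q = 2$ recovers $\lambda = 1/25$, $\lambda' = 0$.
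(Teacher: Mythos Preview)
Your approach is sound and would yield the stated formulas, but it is a genuinely different route from the paper's. You reduce first to the $2^N$-dimensional permutation subspace $V$, then to the four-dimensional image $W$ of $G_1$, and propose to compute the resulting $4\times 4$ matrices by explicit Weingarten bookkeeping. The paper instead exploits a swap trick: since the Haar projector $G_2$ absorbs the site swap $S_{12}$ and $G_3 = S_{12} G_1 S_{12}$, one has $G_1 G_2 G_3 G_2 G_1 = G_1 G_2 G_1 G_2 G_1 = (G_1 G_2 G_1)^2$. Both $G_1 G_2 G_1$ and $G_1 G_3 G_1$ are then recognized as three-site brickworks, with effective local dimensions $(q,q,q^{N-2})$ and $(q,q^{N-2},q)$ respectively, and the subleading eigenvalue is read off from a known closed form for that case. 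The paper's argument is thus much shorter but leans on an external result; yours is self-contained at the cost of substantially more computation.

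One small caveat on your streamlining step (i): the $I \leftrightarrow S$ symmetry alone does not force the two non-trivial eigenvalues to coincide. It only forces the lower $2\times 2$ block to commute with the swap, i.e.\ to take the form $\bigl(\begin{smallmatrix} a & b \\ b & a \end{smallmatrix}\bigr)$ with eigenvalues $a\pm b$; it does not give $\mu I_2$ a priori. Whether $b=0$ would emerge from your explicit computation, and in any case the identification of $\lambda$ as the \emph{largest} non-unit eigenvalue is unaffected, so this does not threaten the final result.
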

\begin{proof}
    Let $S_{12}$ be the operator which swaps sites $1$ and $2$. By the symmetry of the Haar measure, we have $S_{12} G_2 = G_2 S_{12} G_2$. We can thus write $G_1 G_2 G_3 G_2 G_1 = G_1 (G_2 S_{12}) G_3 (S_{12} G_2) G_1$. But $G_3 = S_{12} G_1 S_{12}$, so we see $G_1 G_2 G_3 G_2 G_1 = G_1 G_2 G_1 G_2 G_1 = (G_1 G_2 G_1)^2$.

    It follows that the subleading eigenvalue of $G_1 G_2 G_3 G_2 G_1$ is the square of the subleading eigenvalue of $G_1 G_2 G_1$. This latter circuit is in effect a three-site brickwork with local dimensions $(q, q, q^{N-2})$.

    Now we use Theorem 37 of Ref.~\onlinecite{Allen2024}, which gives an eigenvalue spectrum of the form
    \begin{gather}
        \left\{ 1, \,\,\,\frac{(Q_1^2-1)Q_2^2(Q_3^2-1)}{(Q_1^2Q_2^2-1)(Q_2^2Q_3^2-1)}\right\}
    \end{gather}
    for a general three-site brickwork with local dimensions $(Q_1, Q_2, Q_3)$. Applying $Q_1=Q_2=q, Q_3=q^{N-2}$ then establishes the first line of Equation \ref{eq:exact_gaps}. 

    We next consider the circuit $G_1 G_3 G_1$. This is also effectively a three-site brickwork, but the local dimensions are now $(Q_1, Q_2, Q_3)=(q, q^{N-2}, q)$. We can thus follow the same calculation with different parameter values to obtain the second line of Equation \ref{eq:exact_gaps}. The remainder of the lemma is obtained by substituting $q = 2$ and taking the limit $N \rightarrow \infty$.
\end{proof}

To establish Theorem \ref{thm:depth}, we use the following result.
\begin{lemma}
    \label{lemma:depth_from_gap}
    Suppose \(\Phi\) is a moment operator of order $t$ acting on $N$ sites of dimension $q$ with largest non-unit eigenvalue $\lambda$. Then
    \begin{gather}
        \lambda \leq \epsilon_A \leq 2\epsilon_M \leq 2 q^{2Nt}\lambda
    \end{gather}
    where $\epsilon_A$ and $\epsilon_M$ are the additive and multiplicative errors, respectively, of \(\Phi\) relative to \(\Phi_\text{Haar}\).
\end{lemma}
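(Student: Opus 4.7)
The plan is to prove the three inequalities in the chain separately. All three exploit the fact that $\Phi$ and $\Phi_\text{Haar}$ commute and share unit eigenspace (the commutant of the diagonal $t$-fold unitary action), so we may write $\Phi = \Phi_\text{Haar} + \Delta$ with $\Delta$ supported orthogonally to the commutant and with spectral radius equal to $\lambda$.

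For $\lambda \leq \epsilon_A$ I would take a Hermitian eigenvector $V$ of $\Delta$ with eigenvalue $\lambda$; since $V$ is orthogonal to the identity it is traceless, and its Jordan decomposition $V = V_+ - V_-$ produces two normalized density matrices $V_\pm / \tr V_\pm$ that together witness a trace-norm distance of at least $\lambda$ between $\Phi$ and $\Phi_\text{Haar}$ under the induced $1 \to 1$ norm definition of additive error. For $\epsilon_A \leq 2\epsilon_M$, the multiplicative PSD sandwich $-\epsilon_M \Phi_\text{Haar}(\rho) \preceq \Delta(\rho) \preceq \epsilon_M \Phi_\text{Haar}(\rho)$ directly bounds the positive and negative parts of $\Delta(\rho)$ separately, each having trace at most $\epsilon_M \tr \Phi_\text{Haar}(\rho) = \epsilon_M$, so $\|\Delta(\rho)\|_1 \leq 2\epsilon_M$.

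The main obstacle is the third inequality $\epsilon_M \leq q^{2Nt}\lambda$, which converts a scalar spectral bound into an operator PSD comparison and therefore accumulates two factors of $q^{Nt}$. My approach would be to pass through the Choi--Jamio\l{}kowski representation: the multiplicative condition is equivalent to the PSD sandwich $-\epsilon_M J(\Phi_\text{Haar}) \preceq J(\Delta) \preceq \epsilon_M J(\Phi_\text{Haar})$ on the Choi matrices. One factor of $q^{Nt}$ comes from the standard amplification $\|J(\Delta)\|_\infty \leq q^{Nt}\|\Delta\|_\infty = q^{Nt}\lambda$, corresponding to probing $\Delta$ on the maximally entangled input. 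The other factor arises from lower-bounding the minimum nonzero eigenvalue of $J(\Phi_\text{Haar})$ on its support by $\Omega(q^{-Nt})$; for $t=1$ this is immediate since $J(\Phi_\text{Haar}) = I/q^N$, and in general I would extract the bound from the explicit Weingarten expansion $J(\Phi_\text{Haar}) \propto \sum_{\sigma,\tau \in S_t} \Wg(\sigma\tau^{-1}) \ket{\sigma}\bra{\tau}$ together with the leading-order scaling $\Wg \sim q^{-Nt}$ in the nondegenerate regime $q^N \geq t$. Composing these bounds via a pseudoinverse argument then delivers $\epsilon_M \leq q^{2Nt}\lambda$; the delicate technical step will be verifying that $J(\Delta)$ vanishes on the kernel of $J(\Phi_\text{Haar})$, so that the ratio is genuinely controlled by the nonzero part of the spectrum.
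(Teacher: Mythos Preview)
Your plan is sound and, at the level of ideas, matches the paper's proof. For the lower bound $\lambda \leq \epsilon_A$ you and the paper do exactly the same thing: feed the subleading eigenvector of $\Phi$ into $\Phi - \Phi_\text{Haar}$ and read off $\lambda$ as a lower bound on the induced trace norm (the paper simply writes $\|(\Phi - \Phi_\text{Haar})\rho_*\|_1 = \lambda\|\rho_*\|_1$ and invokes $\Phi_\text{Haar}\rho_* = 0$, so your Jordan decomposition of $V$ into $V_\pm$ is unnecessary overhead---the $1\to 1$ norm is already a sup over all trace-class operators, not just states). For the middle inequality $\epsilon_A \leq 2\epsilon_M$, your PSD-sandwich argument is precisely the content of Lemma~3 of Brand\~ao--Harrow--Horodecki, which the paper simply cites rather than reproves.

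The only substantive divergence is the third inequality $\epsilon_M \leq q^{2Nt}\lambda$. The paper again just invokes an external result (Lemma~4 of the same reference) and moves on, whereas you propose to reprove it through the Choi picture: bound $\|J(\Delta)\|_\infty$ by $q^{Nt}\lambda$ and lower-bound the minimum nonzero eigenvalue of $J(\Phi_\text{Haar})$ by $\Omega(q^{-Nt})$ via Weingarten asymptotics, then combine with a pseudoinverse argument. This route is correct and is essentially how the cited lemma is established. The ``delicate technical step'' you flag---that $J(\Delta)$ vanishes on $\ker J(\Phi_\text{Haar})$---is in fact immediate: any moment operator has Kraus operators of the form $U^{\otimes t}$, so its Choi matrix is supported on the span of $\{\text{vec}(U^{\otimes t}) : U \in U(q^N)\}$, which is exactly the support of $J(\Phi_\text{Haar})$. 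So your obstacle is smaller than you feared. The trade-off is that the paper's proof is two lines of citation, while yours requires a page of Weingarten bookkeeping to pin down the constant, and may yield a constant slightly worse than~$1$; this is harmless for the downstream application to Theorem~1 but would not reproduce the lemma exactly as stated.
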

\begin{proof}
    This statement is a slight generalization of Lemma 55 of ref.~\onlinecite{Allen2024}. We apply Lemma 4 of ref.~\onlinecite{Brandao2016}, which says that the multiplicative error $\epsilon_M$ between $\Phi$ and $\Phi_\text{Haar}$ is related to the spectral gap by $ \epsilon_M \leq q^{2Nt} \lambda$. We then apply Lemma 3 of the same work, which tells us \(\epsilon_A \leq 2 \epsilon_M\).

    By definition of the diamond norm,
    \begin{gather}
        \epsilon_A = ||\Phi - \Phi_\text{Haar}||_\diamond \geq \frac{||(\Phi - \Phi_\text{Haar})\rho||_1}{||\rho||_1}
    \end{gather}
    for all operators $\rho$. Consider the subleading eigenstate $\rho_*$ of the moment operator $\Phi$. We see
    \begin{gather}
        ||\Phi\rho_*||_1 = \lambda_* ||\rho_*||_1 = \lambda ||\rho_*||_1\\
    \end{gather}
    It also follows from Lemma 10 of ref.~\onlinecite{Belkin2023} that
    \begin{gather}
        ||\Phi_\text{Haar} \rho_*||_1 = 0
    \end{gather}
    and so \(\epsilon_A \geq \lambda\).
\end{proof}
Now, let $N$ be large enough that $\lambda > \lambda'$. We have
\[||\Phi^d - \Phi_{\text{Haar}}||_\diamond \geq \lambda^d\]
and 
\[||\Phi'^d - \Phi_{\text{Haar}}||_\diamond \leq q^{2Nt}{\lambda'}^d\]
When
$q^{2Nt}{\lambda'}^d < \lambda^d$
it follows that 
$||\Phi'^d - \Phi_{\text{Haar}}||_\diamond < ||\Phi^d - \Phi_{\text{Haar}}||_\diamond$. This occurs when the depth is
\begin{gather}
    d > \frac{2Nt \log q}{\log \frac{\lambda}{\lambda'}}
\end{gather}
The conditions of Theorem \ref{thm:depth} are thus satisfied by choosing
\begin{align}
    d &= 1 + \left \lfloor\frac{2Nt \log q}{\log \frac{\lambda}{\lambda'}}\right \rfloor \\
    \epsilon_A &= q^{2Nt}{\lambda'}^d
\end{align}
The same argument goes through for multiplicative error, with $\epsilon_M = \frac{1}{2}\epsilon_A$. For the architecture of Figure \ref{fig:architecture}a with $q = t = 2$, this formula becomes
\begin{align}
    d &= 1 + \left \lfloor
    \frac{2}{1 + \frac{1}{N}\log_2 \left|\frac{1 - 16 \cdot 4^{-N}}{15}\right|}\right \rfloor \\
    \epsilon_A &= 8^{N}\left(\frac{3}{2^N - 4 \cdot 2^{-N}}\right)^{2d}
\end{align}
We can also compute the exact multiplicative error numerically. Results are shown in Figure \ref{fig:architecture}c. We see that for $N \geq 5$, the censoring inequality is violated even at $d = 1$.

\section{Conclusion}
We have shown that censoring inequalities for quantum circuit architectures generally do not hold. In particular, gates which ``hide'' quantum information in a sheltered location can reduce scrambling. Our arguments apply to several measures of scrambledness and several types of deletion. Furthermore, the underlying intuition suggests that any other reasonable measure of scrambledness is likely to yield the same story. 

These results rule out certain approaches to resolving two open questions. The first question is the approximate $t$-design depth of the 1D brickwork, which is related to the architecture of Ref. \onlinecite{Schuster2024} by censoring. Second is the maximum possible singular value spectral gap of a connected architecture. Any connected architecture can be reduced to some spanning tree by deleting certain gates, so a censoring inequality would have implied that the maximum possible gap occurred in a treelike architecture.

Our results suggest a more general question about how global scrambling depends on local dynamics. In the approximate $t$-design setting, Ref.~\onlinecite{Yada2025} gives a relationship between local and global spectral gaps, suggesting that the rate of global scrambling is at worst proportional to the rate of local scrambling. In the same vein, Ref.~\onlinecite{Suzuki2024} shows that certain highly-entangling local ensembles give $t$-designs faster than Haar-random local gates. One can interpret this result as showing, loosely, that removing randomness from the circuit in a way that increases the typical amount of local scrambling makes the circuit a better approximate $t$-design. In this framing, our results show that even removing randomness in a way that \textit{decreases} local scrambling can have the same effect. It remains unclear if there is any clear conceptual description of the relationship between local and global behavior.

We also show that censoring inequalities do hold for the last gate in a circuit. The resulting crossover in scrambledness suggests that some unitary ensembles may be in some sense less resilient against scrambling than others, even at equal distance from the Haar measure. It is not clear whether this should be attributed only to the fine-tuning of these specific architectures or if there is some more general principle at work. What properties of an architecture determine the minimum number of additional gates necessary to obtain an $\epsilon$-approximate $t$-design? This may be a promising area for future study.

We give several examples of architecture which violate censoring inequalities. However, all of our architectures feature a connectivity graph with nontrivial topology. However, none of our architectures feature a linear connectivity graph. It thus remains plausible that censoring inequalities do hold for architectures with geometric locality on the line, since in this case it is more difficult for information to ``dodge'' scrambling.  Our examples also feature discrete unitary gates. It may be interesting to investigate a continuous-time version of this problem, such as the conditions under which adding a new interaction to a random Hamiltonian can slow down thermalization.

\bibliographystyle{unsrt}
\bibliography{references}

\begin{appendices}
\section{Numerical results}
\label{app:numerics}
\subsection{Exact multiplicative error}
\label{app:numerical_multerrs}
Figure \ref{fig:multerr-crossover} shows the results of a numerical calculation of the multiplicative error of the second moment operator of the architectures shown in Figure \ref{fig:architecture} relative to the Haar measure. 
\begin{figure}[h]
    \centering
    \begin{tikzpicture}
        \begin{scope}
            \node[anchor=north west,inner sep=0] (image_b) at (0,0)
            {\includegraphics[width=0.4\columnwidth]{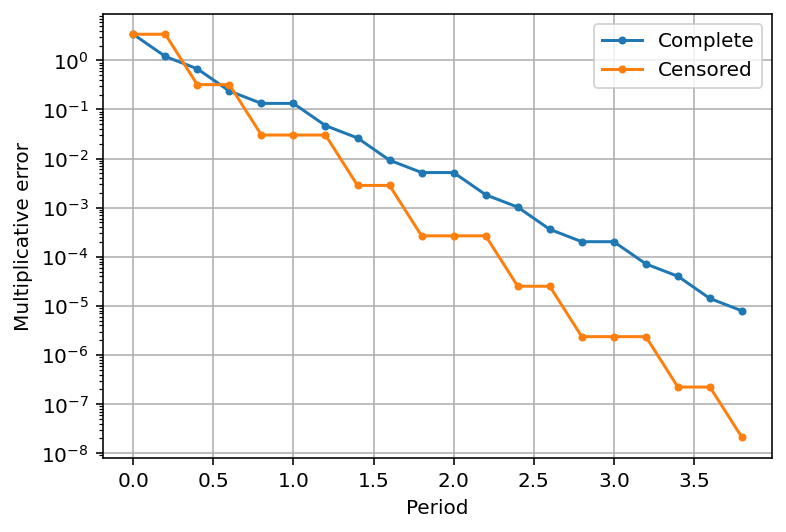}};
        \end{scope}
    \end{tikzpicture}
    \vspace*{-0.4cm}
    \caption{Crossover of multiplicative error for the circuit of Figure \ref{fig:architecture} with $5$ qubits.}
    \label{fig:multerr-crossover}
\end{figure}
One can see that omitting a gates causes the multiplicative error to increase in the short run, but decrease in the long run. This is similar in spirit to the Mpemba effect: A system which is further from equilibrium at early times approaches equilibrium faster at late times. A more dramatic instance of this effect is shown in Figure \ref{fig:mpemba}.

\begin{figure}[h]
    \centering
    \begin{tikzpicture}
        \begin{scope}
            \node[anchor=north west,inner sep=0] (image_a) at (0,0)
            {\includegraphics[width=0.3\columnwidth]{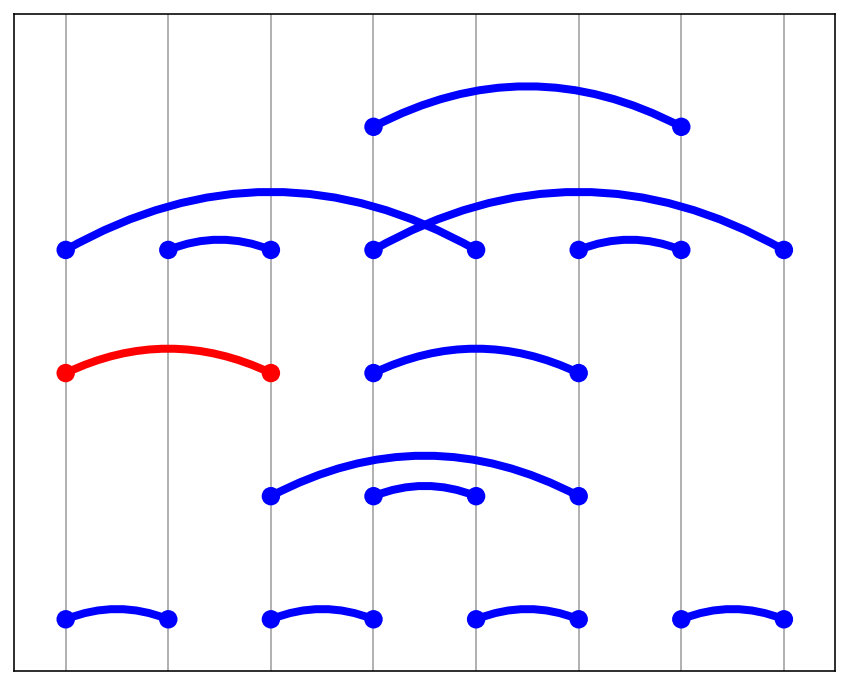}};
            \node [anchor=north west] (note) at (-0.5,0) {\small{\textbf{a)}}};
        \end{scope}
        \begin{scope}[xshift=0.33\columnwidth]
            \node[anchor=north west,inner sep=0] (image_b) at (0,0)
            {\includegraphics[width=0.4\columnwidth]{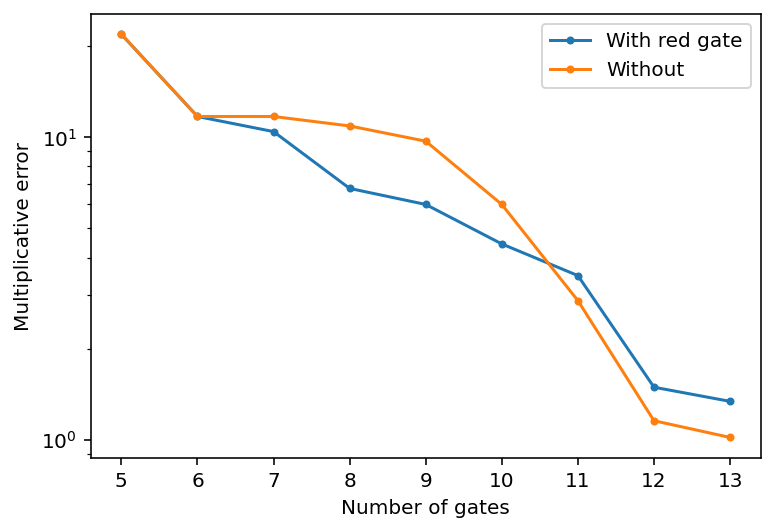}};
            \node [anchor=north west] (note) at (0.1,0) {\small{\textbf{b)}}};
        \end{scope}
    \end{tikzpicture}
    \vspace*{-0.4cm}
    \caption{(a) A circuit which shows a dramatic crossover in the multiplicative error over time. (b) Evolution of multiplicative error. Because the error is initially very large, we omit the first five points from this graph.}
    \label{fig:mpemba}
\end{figure}

\subsection{Small examples}
\label{app:minimal_examples}
Here we give smaller examples involving only two-site gates. They were found by numerical search; we have no analytical proof or intuition to explain their properties. The first, shown in Figure \ref{fig:minimal-circuit}, acts on five qubits with six gates and has a subleading eigenvalue which increases when the last gate is deleted. The second, shown in Figure \ref{fig:multerr-circuit}, acts on seven qubits with eight gates and has a multiplicative error which decreases when one of the interior gates is deleted. 

\begin{figure}[h]
    \centering
    \begin{tikzpicture}
        \begin{scope}
            \node[anchor=north west,inner sep=0] (image_a) at (0,0)
            {\includegraphics[width=0.4\columnwidth]{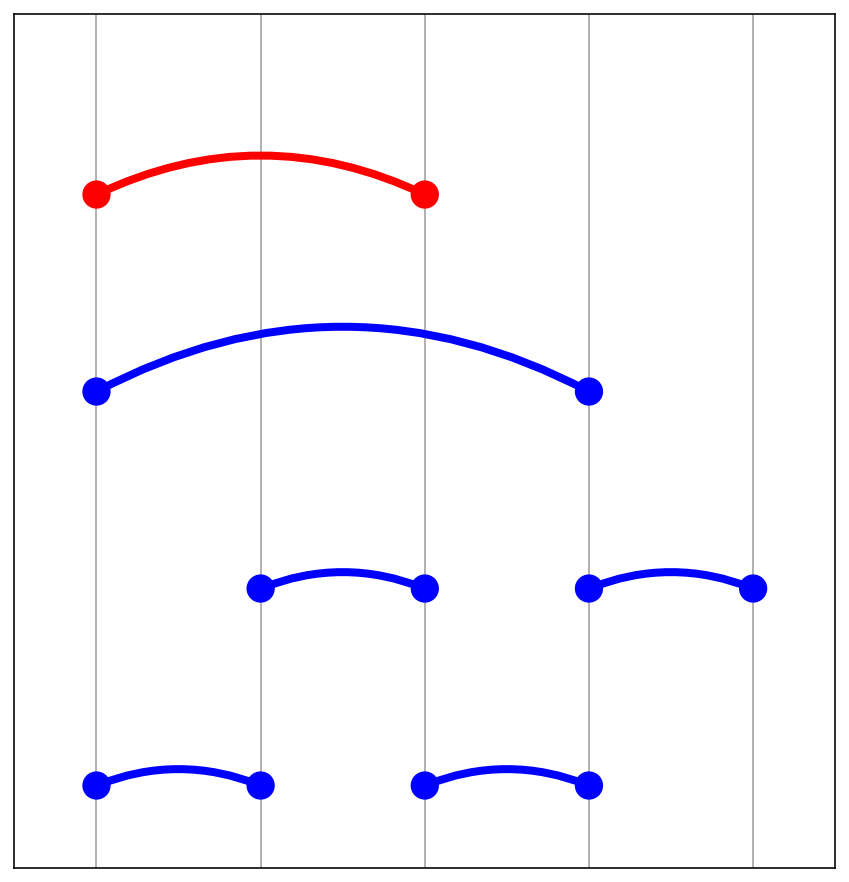}};
        \end{scope}
    \end{tikzpicture}
    \vspace*{-0.4cm}
    \caption{A small circuit on $5$ qubits which violates the censoring inequality for spectral gaps. The largest non-unit eigenvalue of the second moment operator is $0.2512$ with the final gate (red) and $0.2423$ without.}
    \label{fig:minimal-circuit}
\end{figure}

\begin{figure}[h]
    \centering
    \begin{tikzpicture}
        \begin{scope}
            \node[anchor=north west,inner sep=0] (image_b) at (0,0)
            {\includegraphics[width=0.4\columnwidth]{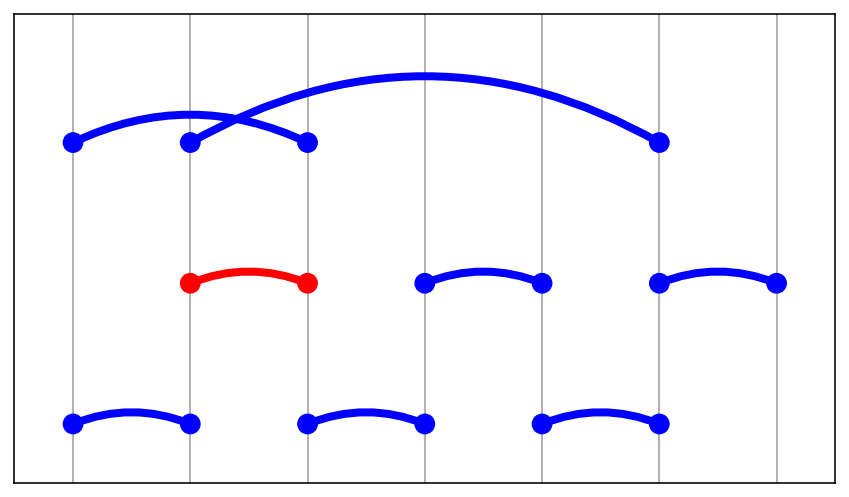}};
        \end{scope}
    \end{tikzpicture}
    \vspace*{-0.4cm}
    \caption{A small circuit on $7$ qubits which violates the censoring inequality for multiplicative error. The $t = 2$ multiplicative errors are $2.91$ and $1.96$ with and without the red gate, respectively.}
    \label{fig:multerr-circuit}
\end{figure}

\subsection{Architectures sampled from graphs}
\label{app:graphs}
Another common case of interest is architectures in which the spatial arrangement is not fixed. Instead, locations of gates are sampled from the uniform distribution over the edges of some graph over the sites. In this case one may ask if deleting an edge from the graph always decreases some measure of scrambledness. We show a counterexample to this conjecture by computing the spectral gaps of the second moment operator of the linear and lollipop graphs. The lollipop graph has more edges, but a smaller spectral gap. Results are shown in Figure \ref{fig:lollipop}.

\begin{figure}[h]
    \centering
    \begin{tikzpicture}
        \begin{scope}
            \node[anchor=north west,inner sep=0] (image_a) at (0,-0.5)
            {\includegraphics[width=0.35\columnwidth]{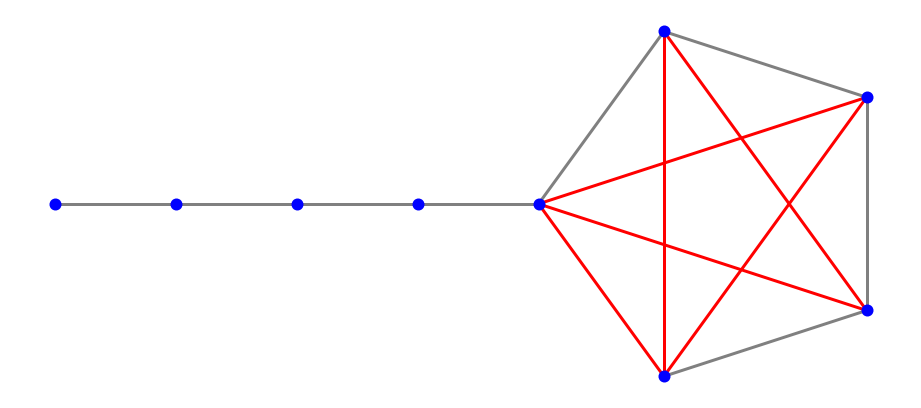}};
            \node [anchor=north west] (note) at (-0.0,0) {\small{\textbf{a)}}};
        \end{scope}
        \begin{scope}[xshift=0.42\columnwidth]
            \node[anchor=north west,inner sep=0] (image_b) at (0,0)
            {\includegraphics[width=0.35\columnwidth]{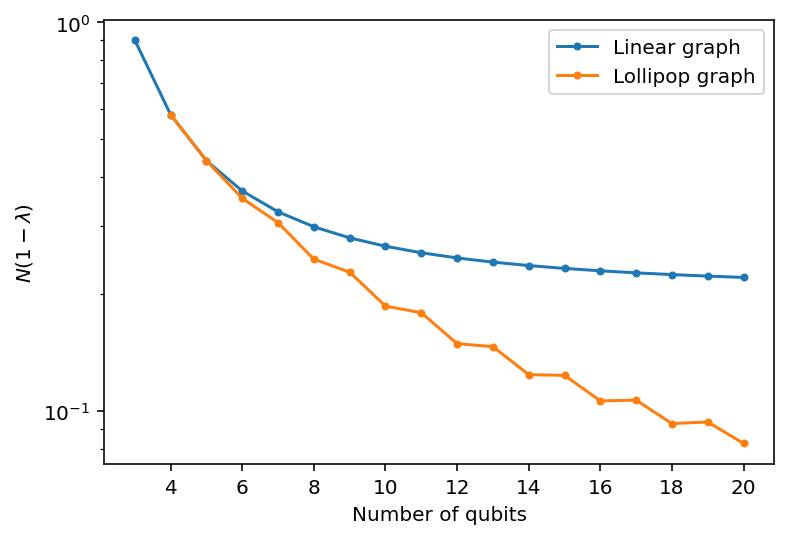}};
            \node [anchor=north west] (note) at (-0.15,0) {\small{\textbf{b)}}};
        \end{scope}
    \end{tikzpicture}
    \vspace*{-0.4cm}
    \caption{(a) The lollipop graph with 10 qubits. When the red edges are removed, the linear graph remains. (b) Normalized spectral gaps of the second moment operators of the linear and lollipop graph architectures.}
    \label{fig:lollipop}
\end{figure}
Since the moment operators corresponding to architectures sampled from graphs are Hermitian, the eigenvalue and singular value spectral gaps are the same. By Lemma \ref{lemma:depth_from_gap}, it follows that there exists some value of $\epsilon$ and number of gates such that the linear graph forms an $\epsilon$-approximate $t$-design, but the lollipop graph does not. This holds for both additive and multiplicative error.

\section{Proof of Table \ref{table:status}}
\label{app:table_proofs}
Here we describe the arguments which fill out Table \ref{table:status}. These proofs are all quite straightforward, but described explicitly here for completeness.

\begin{lemma}
Let \(\Psi\) be the moment operator corresponding to the last gate and \(\Phi'\) be that of the rest of the circuit, such that \(\Phi = \Psi \circ \Psi'\) is the moment operator corresponding to the full architecture. Let $s$ be the largest singular value of $\Phi - \Phi_H$, let $\epsilon_A = ||\Phi - \Phi_H||_\diamond$, and let $\epsilon_M$ be the smallest number such that \(\epsilon_M\Phi_H \pm (\Phi - \Phi_H)\) is completely positive. Define $s'$, $\epsilon_A'$, and $\epsilon_M'$ analogously. Then the following censoring inequalities hold:
\begin{align}
    s &\leq s' \\
    \epsilon_A &\leq \epsilon_A' \\
    \epsilon_M &\leq \epsilon_M' \\
\end{align}
\end{lemma}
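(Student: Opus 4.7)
\medskip

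\noindent\textbf{Proof proposal.} The plan is to reduce all three inequalities to the single structural identity
\begin{equation}
\Phi - \Phi_H \;=\; \Psi \circ (\Phi' - \Phi_H),
\end{equation}
and then apply the appropriate contractivity property of $\Psi$ for each norm in turn. The identity itself follows once I verify that $\Psi \circ \Phi_H = \Phi_H$. This is true because the image of $\Phi_H$ lies in the commutant of the global $t$-fold unitary action, which is contained in the commutant of the (local) $t$-fold action of the last gate's unitary group; hence $\Psi$ acts as the identity on the image of $\Phi_H$. Substituting this into $\Phi = \Psi \circ \Phi'$ gives the displayed identity.

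With the identity in hand I would treat the three bounds separately. For the largest singular value, I would observe that $\Psi$ is the orthogonal projector onto its $+1$ eigenspace (in the Hilbert--Schmidt inner product on the vectorized space), so its operator norm is $1$, and therefore $s = \|\Psi(\Phi' - \Phi_H)\|_{\mathrm{op}} \leq \|\Phi' - \Phi_H\|_{\mathrm{op}} = s'$. For the diamond-norm (additive) inequality, I would use that $\Psi$, being an average of unitary conjugations on $t$ copies, is CPTP and therefore has diamond norm $1$; the submultiplicativity $\|A \circ B\|_\diamond \leq \|A\|_\diamond \|B\|_\diamond$ then yields $\epsilon_A \leq \epsilon_A'$.

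For the multiplicative error, I would use the identity together with the fact that $\Phi_H = \Psi \circ \Phi_H$ to rewrite
\begin{equation}
\epsilon_M' \Phi_H \pm (\Phi - \Phi_H) \;=\; \Psi \circ \bigl(\epsilon_M' \Phi_H \pm (\Phi' - \Phi_H)\bigr).
\end{equation}
By definition of $\epsilon_M'$, the operator in parentheses on the right is completely positive; since $\Psi$ is completely positive, the left-hand side is completely positive as well, showing $\epsilon_M \leq \epsilon_M'$.

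The bulk of the work is conceptual rather than computational, and I expect the only step that warrants real care is the Haar-invariance identity $\Psi \circ \Phi_H = \Phi_H$: once that is in place, each inequality reduces to a one-line application of a standard contractivity property (operator-norm, diamond-norm, or complete positivity) of the Haar moment operator $\Psi$. No appeal to the internal structure of $\Phi'$ is required, which is what makes the last-gate case tractable while the analogous statement for arbitrary interior gates fails, as shown elsewhere in the paper.
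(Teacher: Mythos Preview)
Your proposal is correct and follows essentially the same route as the paper: establish the absorption identity $\Psi\circ\Phi_H=\Phi_H$, factor $\Phi-\Phi_H$ through $\Psi$, and then invoke the relevant contractivity of $\Psi$ (operator norm $\leq 1$, diamond norm $\leq 1$, complete positivity) for each of the three inequalities. The only cosmetic difference is that for the singular-value bound the paper factors as $(\Psi-\Phi_H)\circ(\Phi'-\Phi_H)$ rather than $\Psi\circ(\Phi'-\Phi_H)$; since $\Psi-\Phi_H$ is itself an orthogonal projector this is equivalent to your argument, and your version is arguably cleaner.
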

\begin{proof}
\begin{enumerate}
    \item  Because $\Psi \circ \Phi_H = \Phi' \circ \Phi_H = \Phi_H \circ \Phi_H = \Phi_H$, we can write \(\Phi - \Phi_H = (\Psi - \Phi_H) \circ (\Phi' - \Phi_H)\). The largest singular values of $\Psi - \Phi_H$ and  $\Phi' - \Phi_H$ are $1$ and $s'$, respectively, so the largest singular value of their product is at most $s'$.
    \item The additive error is equal to the diamond norm distance from the Haar measure. But
    \begin{gather}
        ||\Phi - \Phi_H||_\diamond = ||\Psi \circ (\Phi' - \Phi_\text{Haar})||_\diamond 
    \end{gather}
    which by monotonicity(?) of the diamond norm is at most \(||\Phi' - \Phi_\text{Haar}||_\diamond\).
    \item By assumption \(\epsilon_M'\Phi_H \pm (\Phi' - \Phi_H)\) is completely positive. \(\Psi\) is also completely positive. The composition of two completely positive channels is also completely positive, and so 
\begin{gather}
    0 \preceq \Psi \circ \left[\epsilon_M'\Phi_H \pm (\Phi' - \Phi_H)\right] = \epsilon_M'\Phi_H \pm (\Phi - \Phi_H)
\end{gather}
is also completely positive. Since $\epsilon_M$ is defined to be the smallest number such that \(0 \preceq \epsilon_M\Phi_H \pm (\Phi - \Phi_H)\), we have $\epsilon_M \leq \epsilon_M'$.
\end{enumerate}
\end{proof}
These arguments apply to any gate from the first or last layer of an architecture, which we refer to as ``boundary gates".

For interior gates, Lemma \ref{lemma:exact_gaps} gives a counterexample to the eigenvalue spectral gap. Since the architecture of Figure \ref{fig:architecture} is Hermitian, this also serves as a counterexample for the singular value spectral gap. We can cyclically permute gates to make a deleted gate a boundary gate without changing the eigenspectrum, which is a counterexample for the eigenvalue spectral gap in the case of boundary gate deletion. Applying Lemma \ref{lemma:depth_from_gap} then establishes multiplicative and additive error.

For graph edges, meanwhile, the numerical calculations shown in Figure \ref{fig:lollipop} 
give a counterexample for the eigenvalue spectral gap. The moment operator is Hermitian, so this also serves for the singular value spectral gap. Applying Lemma \ref{lemma:depth_from_gap} again rules out a censoring inequality for multiplicative and additive error. This completes the proofs of Table \ref{table:status}.
\end{appendices}
\end{document}